\documentclass[letterpaper, 10 pt, conference]{ieeeconf}
\usepackage{cite}
\usepackage{amsmath,amssymb,amsfonts}
\usepackage{graphicx}
\usepackage{textcomp}
\usepackage{threeparttable}
\usepackage{hyperref}
\usepackage{amssymb}
\usepackage{enumitem}
\usepackage{tabu}
\usepackage[utf8]{inputenc}
\usepackage{setspace}
\usepackage{subcaption}
\usepackage{multirow}
\usepackage{tabularx}
\usepackage{array}
\usepackage{makecell}
\usepackage{algpseudocode}
\usepackage{dblfloatfix}
\usepackage{tikz-network}
\tikzset{
block/.style = {draw, fill=white, rectangle, minimum height=2em, minimum width=5em},
tmp/.style  = {coordinate}, 
sum/.style= {draw, fill=white, circle, node distance=1cm},
input/.style = {coordinate},
output/.style= {coordinate},
pinstyle/.style = {pin edge={to-,thin,black}
}
}
\newtheorem{theorem}{\textbf{Theorem}}
\newtheorem{lemma}{\textbf{Lemma}}

\newtheorem{corollary}{\textbf{Corollary}}
\newtheorem{remark}{\textbf{Remark}}
\usepackage{authblk}
\usepackage{float}
\usepackage[table]{xcolor}
\setlength{\parindent}{0pt}
\newtheorem{definition}{\textbf{Definition}}

\usepackage{color}
\usepackage{tikz}  
\bibliographystyle{ieeetr}
\IEEEoverridecommandlockouts 
\newcommand{\absElem}[1]{|#1 |_{\text{abs}}}

\title{\Large Robust Stability Analysis of Positive Lur'e System with Neural Network Feedback}

\author{Hamidreza Montazeri Hedesh$^1$, Moh. Kamalul Wafi$^1$, Bahram Shafai$^1$, and Milad Siami$^1$\thanks{$^1$H. Montazeri Hedesh, M. K. Wafi, B. Shafai and M. Siami are with the Department of Electrical \& Computer Engineering, Northeastern University, Boston, MA 02115, USA.
(e-mails: {\tt\footnotesize	\{montazerihedesh.h, wafi.m, b.shafai m.siami\}@northeastern.edu}).
}

\thanks{This material is based upon work supported in part by the U.S. Office of Naval Research under Grant Award N00014-21-1-2431 and in part by the DEVCOM Analysis Center and was accomplished under Contract Number W911QX-23-D0002. The views and conclusions contained in this document are those of the authors and should not be interpreted as representing the official policies, either expressed or implied, of the DEVCOM Analysis Center or the U.S. Government. The U.S. Government is authorized to reproduce and distribute reprints for Government purposes notwithstanding any copyright notation herein.
}
}
\begin{document}
\maketitle
\thispagestyle{empty}
\pagestyle{empty}
\begin{abstract}
This paper investigates the robustness of the Lur'e problem under positivity constraints, drawing on results from the positive Aizerman conjecture and robustness properties of Metzler matrices. Specifically, we consider a control system of Lur'e type in which not only the linear part includes parametric uncertainty but also the nonlinear sector bound is unknown. We investigate tools from positive linear systems to effectively solve the problems in complicated and uncertain nonlinear systems.
By leveraging the positivity characteristic of the system, we derive an explicit formula for the stability radius of Lur'e systems. Furthermore, we extend our analysis to systems with neural network (NN) feedback loops. Building on this approach, we also propose a refinement method for sector bounds of NNs. This study introduces a scalable and efficient approach for robustness analysis of both Lur'e and NN-controlled systems. Finally, the proposed results are supported by illustrative examples.
\end{abstract}

\allowdisplaybreaks
\section{Introduction}
In recent years, the rapid advancement of NNs has driven their integration into various safety critical autonomous systems, raising critical concerns about their stability and robustness in real-world applications \cite{szegedy2013intriguing}. Therefore, ensuring their reliable performance under uncertainty is imperative. Researchers strived to address this issue by adopting various concepts from mathematics and control theory. Surveys such as \cite{dawson2023safe} and \cite{tsukamoto2021contraction} explain some of the efforts. However, a very straightforward yet resourceful tool has been overlooked in these efforts—the Lur'e system.

Lur’e systems provide a flexible model for representing feedback systems with both system uncertainties and complicated nonlinearities like NNs, enabling the use of various stability analysis tools. This paper leverages the Lur’e framework to analyze the robustness of NN-controlled systems. We first explore the robustness of Lur'e systems under positivity constraints and structured uncertainties, then apply the framework to NN-controlled systems to enhance safety in applications.

Lur'e systems have long served as a foundation for stability analysis of nonlinear systems, underpinning fundamental methods such as the \textit{Lur'e-Postnikov criterion}, \textit{Aizerman and Kalman conjectures}, the \textit{circle criterion}, and their extensions. However, these classical approaches have limitations when dealing with system uncertainties, leaving room for further exploration.
The existing literature on robustness analysis of Lur'e systems with structured uncertainties, which is the focus of our study, spans various approaches. For instance, studies such as \cite{TESI1991147, wang2008reliable, hao2010absolute} employ Popov criterion, $H_\infty$ control, and Lyapunov theory, respectively. Additionally, some works, like \cite{tan1999absolute}, explore more complex perturbation structures. There are recent works such as \cite{yin2021stability} that implicitly studied robust stability of Lur'e systems using Integral Quadratic Constraints (IQCs) and implemented it on an NN-controlled system. However, their proposed method suffers from high computational complexity when it comes to large systems. In contrast, this paper introduces a straightforward and scalable approach to robustness analysis of Lur'e systems. By utilizing positivity constraints on the system, we derive an explicit formula for the stability radius of Lur’e systems under a comprehensive class of structured uncertainties, improving upon prior works with similar uncertainty structure such as \cite{fuh2009robust, dun2008robust} in both simplicity and computational efficiency.

It is worth noting that positive-constrained systems are widely used in modeling numerous real-world systems, including engineering, biology, economics, and population dynamics.
The positivity constraint offers a significant advantage in the stability and robustness analysis of linear systems \cite{farina2000positive,10708136,shafai2024positive}; however, its integration into nonlinear systems has not been extensively explored. In this paper, we exploit positivity constraints to develop efficient robustness analysis tools for nonlinear Lur’e systems under structured uncertainty. A key novelty of this paper is demonstrating how a linear system tool can be extended to analyze nonlinear robust stability.

Lur'e systems have also been employed in the safety analysis of NNs. For instance, studies such as \cite{fazlyab2021introduction,yin2021stability, yin2021imitation} implicitly utilize Lur'e systems alongside the \textit{S-procedure} to define the forward reachable set of an NN and ensure the stability of NN-controlled systems. Our recent studies \cite{10561541,hedesh2025local} introduced a simple, scalable method for stability assurance in NN-controlled systems using the Lur'e framework. The studies \cite{Wafi,mashhadireza2024iterative} implicitly utilize Lur'e system with reinforcement learning and iterative learning subject to time delays and nonlinear perturbations, respectively. Also, some studies \cite{10155284} investigated the stability of feedback systems with ReLU NNs. However, none of them addressed a scalable and formally guaranteed robustness measure for NN feedback loops.  To address this gap, our study proposes a straightforward and scalable method for calculating the stability radius of systems with NN feedback loops.

In addition to robustness analysis, this study addresses sector bounds for NNs. Sector bounds are a classical tool in control theory, used to ensure stability in systems with nonlinearities by defining constraints within which the system's nonlinearity must operate. NN bounds are also one of the leading topics in the machine learning community. There are extensive recent studies on NN bounding methods and refinement of the existing bounds. Examples are the evolution of bounding methods like \textit{CROWN} \cite{kotha2023provably}, \textit{BnB} \cite{zhang22babattack}, \textit{DeepPoly} \cite{singh2019abstract}, \textit{Recurjac} \cite{zhang2019recurjac}, and \textit{global sector bounds} \cite{10561541}. In this study, we build upon our results in Lur'e systems and propose a novel method for obtaining tight sector bounds for NNs.

The contributions of this paper are threefold. First, we propose an explicit formula for the stability radius of Lur'e systems with positivity constraints under a class of structured uncertainties. Second, by defining sector bounds for FFNNs, we extend this analysis to systems with NN feedback loops. Finally, we apply these findings to establish refined sector bounds for NNs in the feedback loop of control systems.

This paper is structured as follows. First, we introduce the theoretical frameworks that underpin our main results. The next two sections present our core findings, followed by illustrative examples demonstrating their applicability. Finally, we summarize the key insights and conclude with a discussion of our results.

\subsection{Notation}
The set of real numbers, \(n\)-dimensional real vectors, and \(m \times n\)-dimensional real matrices are denoted by \(\mathbb{R}\), \(\mathbb{R}^n\), and \(\mathbb{R}^{m \times n}\), respectively. Additionally, $\mathbb{C}$ is set of complex numbers.
The standard comparison operators \( < \), \( \geq \), \( \leq \), and \( > \) are applied elementwise to vectors and matrices.
\(\mathbb{R}_+\) denotes the set of nonnegative real numbers. Moreover, \(\mathbb{R}_+^n := \{\nu \in \mathbb{R}^n : \nu \geq 0\}\) and \(\mathbb{R}_+^{n \times m} := \{A \in \mathbb{R}^{n \times m} : A \geq 0\}\) represents the set of vectors and matrices with real nonnegative elements. A \emph{Metzler} matrix is characterized by having nonnegative off-diagonal entries. An \textit{M}-matrix is defined as a matrix with non-positive off-diagonal elements and eigenvalues with non-negative real parts. A vector of zeros of size \(n\) is denoted by \(\mathbf{0}_n\). For a given matrix \(A\), the order $\absElem{A}$ means the elementwise absolute value of matrix \(A\), and $\|A\|$ denotes a matrix norm of choice.

\section{Proposed Theoretical Framework}
\begin{figure}[t!]
    \centering
    \begin{tikzpicture}[auto, node distance=2cm,>=latex']
        \node [tmp] (tmp1) {};
        \node [tmp, below of=tmp1,node distance=.75cm] (tmp) {};
        \node [block, right of=tmp1,node distance=1.5cm] (Control) {$\Phi(y)$};
        \node [block, right of=Control,node distance=2.5cm] (Plant) {$G$};
        \node [tmp, right of=Plant,node distance=2.5cm] (tmp2) {};
        \node [tmp, right of=Plant,node distance=1.5cm] (tmp3) {};
        \draw [->] (Control) -- node{$u$} (Plant);
        \draw [->] (Plant) -- node{$y=Cx$} (tmp2);
        \draw [-] (tmp3) |- (tmp);
        \draw [->] (tmp) |- (Control);
    \end{tikzpicture}
    \caption{\small Lur'e system with plant $G$ and nonlinear controller $\Phi$.}\vspace{-.4cm}
    \label{fig:luresystem}
\end{figure}
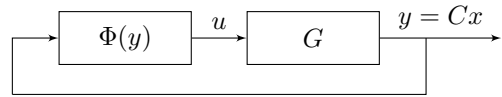
In this section, we establish the theoretical frameworks that underpin our study, including the positive Lur'e system, positive Aizerman conjecture, and the stability radii of nonnegative and Metzler matrices.

\subsection{Positive Lur'e Systems}
Consider a linear time invariant (LTI) system of the form
\begin{equation}\label{eq:generallti}
    \dot x(t) = A x(t) + B u(t), \quad
    y(t) = C x(t),
\end{equation}
where $A \in \mathbb R^{n\times n}$, $B \in \mathbb R^{n\times m}$, and $C \in \mathbb R^{p\times n}$ are constant matrices, and $x(t) \in \mathbb R^n$, $u(t) \in \mathbb R^m$, and $y(t) \in \mathbb R^p$ denote state, input, and output variables.

\begin{definition}\label{def:positive system}
A linear system, described by \eqref{eq:generallti}, is said to be a \emph{positive} system if, for all \( t > 0 \), the state \( x(t) \) remains non-negative, given non-negative initial conditions \( x(0) \geq 0 \) and a non-negative input \( u(t) \geq 0 \).
\end{definition}

\begin{lemma}
The system \eqref{eq:generallti} is positive if and only if the matrix \( A \) is a Metzler matrix, $B \in \mathbb R_+^{n\times m}$, and $C \in \mathbb R_+^{p\times n}$.
\end{lemma}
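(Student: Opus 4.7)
The plan is to prove the equivalence by splitting into the two implications and using the variation-of-constants formula in one direction and well-chosen trajectories in the other. Throughout I would rely on the standard characterization that a matrix $M$ is Metzler if and only if $M + \alpha I \geq 0$ for some sufficiently large scalar $\alpha \geq 0$.

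For the sufficiency (``if'') direction, I would first establish the key fact that $A$ Metzler implies $e^{At} \geq 0$ elementwise for every $t \geq 0$. This follows by writing $A = -\alpha I + (A+\alpha I)$ with $A+\alpha I \geq 0$, so that $e^{At} = e^{-\alpha t}\, e^{(A+\alpha I)t}$, and the second factor is a convergent series of nonnegative matrices. Then I would apply the variation-of-constants formula
\begin{equation*}
    x(t) = e^{At} x(0) + \int_0^t e^{A(t-\tau)} B u(\tau)\, d\tau,
\end{equation*}
and observe that every factor in both terms is nonnegative under the hypotheses $x(0) \geq 0$, $u(\tau) \geq 0$, and $B \geq 0$; hence $x(t) \geq 0$. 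Finally, $y(t) = Cx(t) \geq 0$ follows from $C \geq 0$ and $x(t)\geq 0$.

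For the necessity (``only if'') direction, I would isolate each matrix constraint by a judicious choice of initial condition and input and then pass to an infinitesimal argument at $t=0^+$. To force $A$ to be Metzler, set $u \equiv 0$ and $x(0) = e_i$ (the $i$-th standard basis vector); then $x(t) = e^{At}e_i$ is nonnegative by assumption, and differentiating the $j$-th component at $t=0$ for $j \neq i$ gives $A_{ji} = \dot x_j(0) \geq 0$, so every off-diagonal entry of $A$ is nonnegative. To force $B \geq 0$, take $x(0)=0$ and the constant input $u(t) = e_i$; then $\lim_{t\to 0^+} x(t)/t = Be_i$, and the limit must be nonnegative componentwise. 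For $C \geq 0$, taking $x(0) = e_i$ (an admissible nonnegative state that can be realized instantaneously, or arising along any positive trajectory) and requiring $y = Cx$ to be a meaningful output of a positive system forces $Ce_i \geq 0$ for every $i$.

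The main obstacle I anticipate is a subtle definitional one rather than a computational one: Definition~1 as stated constrains only the state $x(t)$, yet the lemma also requires $C \geq 0$, which is a statement about the output. I would handle this by adopting the standard convention that a positive system additionally requires $y(t) \geq 0$ whenever $x(0) \geq 0$ and $u(t) \geq 0$, which is the universally accepted notion in the positive systems literature and is implicit in the surrounding discussion. Once that is granted, the $Ce_i \geq 0$ argument above closes the proof, and the rest of the reasoning is entirely routine.
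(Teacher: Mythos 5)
Your proof is correct and is the standard argument from the positive systems literature; the paper itself states this lemma without proof, citing it implicitly as a known result (it appears in standard references such as Farina and Rinaldi), so there is no in-paper argument to compare against. Both directions of your reasoning are sound: the semigroup nonnegativity $e^{At}\geq 0$ for Metzler $A$ combined with variation of constants handles sufficiency, and the infinitesimal arguments at $t=0^+$ (differentiating $x_j(t)=({e^{At}e_i})_j$ for $j\neq i$, and taking $\lim_{t\to 0^+}x(t)/t$ under constant input $e_i$) correctly extract the off-diagonal entries of $A$ and the columns of $B$. Your observation that Definition~1 as written constrains only the state, so that the condition $C\geq 0$ cannot be derived without additionally requiring output nonnegativity, is a genuine and worthwhile catch; the resolution you adopt (the standard convention that an internally positive system also has $y(t)\geq 0$ along nonnegative trajectories) is exactly what the paper tacitly assumes, since it goes on to impose $C\in\mathbb{R}_+^{p\times n}$ in the Lur'e setup where the output is fed into the nonlinearity.
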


\begin{lemma}
    A positive system of the form \eqref{eq:generallti} is asymptotically stable if and only if there exists a positive vector $v$ such that $Av<0$.
\end{lemma}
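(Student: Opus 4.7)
The plan is to reduce the stability question for a Metzler $A$ to the classical theory of nonsingular M-matrices. Since the system is positive, $A$ is Metzler, so $B:=-A$ is a $Z$-matrix (non-positive off-diagonal entries). Asymptotic stability of \eqref{eq:generallti} is equivalent to $A$ being Hurwitz, which for a $Z$-matrix $B$ is in turn equivalent to $B$ being a nonsingular M-matrix. Both directions of the claim then follow by invoking two standard equivalent characterizations of nonsingular M-matrices: (i) semi-positivity, and (ii) entrywise nonnegativity of the inverse.

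For sufficiency ($\Leftarrow$), suppose $v>0$ satisfies $Av<0$, i.e.\ $Bv>0$. This is precisely the semi-positivity condition, and for a $Z$-matrix $B$ it is well known to imply that $B$ is a nonsingular M-matrix, hence that $A=-B$ is Hurwitz and the positive system is asymptotically stable. If I prefer a self-contained argument, I would exhibit a linear copositive Lyapunov function: set $V(x):=\max_i x_i/v_i$ on $\mathbb{R}_+^n$, and use the Metzler structure of $A$ together with $Av<0$ to derive, via Dini derivatives, $\dot V\le \lambda V$ with $\lambda:=\max_i (Av)_i/v_i<0$. Positivity of the system keeps trajectories in $\mathbb{R}_+^n$, so exponential decay of $V$ yields asymptotic stability.

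For necessity ($\Rightarrow$), assume $A$ is Hurwitz, so $B=-A$ is a nonsingular M-matrix. The second characterization gives $B^{-1}\ge 0$ entrywise, which one can derive from the decomposition $B=\alpha I - N$ with $N\ge 0$ and $\rho(N)<\alpha$ via the Neumann series $B^{-1}=\alpha^{-1}\sum_{k\ge 0}(N/\alpha)^k$. Taking any strictly positive $w$, say $w=\mathbf{1}_n$, I define $v:=B^{-1}w$, so that $Av=-w<0$. The vector $v$ is strictly positive because if some $v_i$ vanished, the $i$-th row of $B^{-1}$ would have to be identically zero (since $w$ has no zero components), contradicting the invertibility of $B^{-1}$. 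The main obstacle in the write-up is justifying the two M-matrix equivalences without repeating the well-known derivations; I would simply cite them from the positive-systems literature (e.g., Farina--Rinaldi or Berman--Plemmons). The take-away is that, under positivity, the nonlinear stability question collapses into an elementary linear-algebraic condition that can be verified by a single linear program.
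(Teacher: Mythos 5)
The paper states this lemma as standard background and gives no proof of its own, so there is nothing to diverge from: your argument is the classical one, reducing Hurwitz stability of the Metzler matrix $A$ to the nonsingular M-matrix characterizations of $-A$ (semi-positivity for sufficiency, nonnegativity of the inverse plus $v=(-A)^{-1}\mathbf{1}_n$ for necessity), exactly as in Berman--Plemmons or Farina--Rinaldi. Both directions are sound, including the strict-positivity argument for $v$; the only cosmetic caveat is that your copositive Lyapunov alternative $V(x)=\max_i x_i/v_i$ as written only covers nonnegative initial states, which for a linear system is harmless since one can either decompose $x(0)$ into positive and negative parts or use $\max_i |x_i|/v_i$ on all of $\mathbb{R}^n$.
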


Now, we officially define the positive Lur'e system. Consider a system formed by the interconnection of the linear system \eqref{eq:generallti} and a static nonlinear feedback, represented as $u = \Phi(Cx,.)$. Fig. \ref{fig:luresystem} depicts a schematic of this system.
The overall dynamics of the system can be described by the following equation:
\begin{equation}\label{eq:positiveluresystem}
    \dot{x} = Ax(t) + B \Phi(Cx, t).
\end{equation}
Here, the function \( \Phi(\cdot): \mathbb{R}^p \times \mathbb{R} \to \mathbb{R}^m \) represents a static nonlinearity. Additionally, the matrices \( C \in \mathbb{R}_+^{p \times n} \) and \( B \in \mathbb{R}_+^{n \times m} \) have nonnegative elements, while no specific structural constraints, such as the Metzler property, are imposed on the system matrix \( A \).
We assume that $\Phi(0, t) = 0$ for all $t \geq 0$. This ensures that $x_* = 0$ is an equilibrium point of \eqref{eq:positiveluresystem}.
Additionally, we assume the existence of a unique, locally absolutely continuous function $\chi: \mathbb{R}_+ \to \mathbb{R}^n$ that satisfies \eqref{eq:positiveluresystem} almost everywhere for any initial condition $x(0) \in \mathbb{R}^n$. Moreover, we assume that $\Phi(z, t)$ is locally Lipschitz in $z$ and measurable in $t$, along with certain mild boundedness assumptions. These conditions guarantee the existence and uniqueness of the solution, as established in foundational control theory works such as \cite{sontag2013mathematical}.

As part of the definition of Lur'e systems, it is necessary to establish sector bounds for the nonlinear component of the system. In the context of Multi Input Multi Output (MIMO) systems, the multivariable nature of $\Phi$ necessitates a generalization of the classical sector bound definition used in Single Input Single Output (SISO) Lur'e systems. For positive systems, the sector bound for the multivariable function $\Phi$ is best defined through componentwise inequalities. The function $\Phi$ is said to be sector bounded in $[\Sigma_1, \Sigma_2]$ if the following condition holds:
\begin{equation}\label{eq:mimo sector bound}
    \Sigma_1 z \leq \Phi(z, t) \leq \Sigma_2 z, \quad \forall z \in \mathbb{R}^p_+, \forall t \geq 0,
\end{equation}
where $\Sigma_1, \Sigma_2 \in \mathbb{R}^{m \times p}$, and $\Sigma_1 \leq \Sigma_2$. The above formulation fully defines the MIMO positive Lur'e system under consideration. Based on this formulation, we proceed to introduce the positive Aizerman conjecture in the following subsection.

\subsection{Positive Aizerman Conjecture}
The Aizerman conjecture is a classical method for absolute stability analysis of general Lur'e systems, originally formulated for SISO systems. While counterexamples exist in the general case, the conjecture holds under stricter conditions such as system positivity \cite{drummond2022aizerman}. Furthermore, it can be extended to MIMO systems. By reducing nonlinear stability analysis to that of linear approximations, Aizerman conjecture offers a practical foundation for scalable and efficient stability analysis, particularly for systems with complex nonlinearities, such as NN-controlled systems.

The following theorem and remark confirm the validity of Aizerman conjecture for MIMO positive systems.

\begin{theorem}[\textbf{Positive Aizerman}\cite{drummond2022aizerman}]\label{the:positiveaizerman}
    Consider the Lur'e system described by \eqref{eq:positiveluresystem} where $B \geq 0$, $C \geq 0$, and the matrices $\Sigma_1$ and $\Sigma_2$ satisfy $\Sigma_1 \leq \Sigma_2$ with appropriate dimensions. If the matrix $A + B\Sigma_1 C$ is Metzler and $A + B\Sigma_2 C$ is Hurwitz, then for any nonlinearity $\Phi$ within the sector $[\Sigma_1, \Sigma_2]$, as defined by \eqref{eq:mimo sector bound}, the Lur'e system \eqref{eq:positiveluresystem} is globally exponentially stable.
\end{theorem}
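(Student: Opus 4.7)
The plan is to prove the theorem by combining positivity of the closed-loop Lur'e system with a componentwise comparison against the linear upper envelope $A+B\Sigma_2 C$, and then using a linear copositive Lyapunov function supplied by the Metzler--Hurwitz structure.

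First I would verify that the closed loop is a positive system on $\mathbb{R}_+^n$. Rewriting the dynamics as $\dot x = (A+B\Sigma_1 C)x + B\bigl(\Phi(Cx,t)-\Sigma_1 Cx\bigr)$, the matrix $M_1:=A+B\Sigma_1 C$ is Metzler by hypothesis, and for any $x\in\mathbb{R}_+^n$ the vector $Cx$ lies in $\mathbb{R}_+^p$, so the sector bound \eqref{eq:mimo sector bound} gives $\Phi(Cx,t)-\Sigma_1 Cx\ge 0$; multiplying by $B\ge 0$ keeps the residual nonnegative. A standard tangent-cone argument for ODEs (on each face $x_i=0$ the $i$-th component of the vector field is nonnegative) then shows $\mathbb{R}_+^n$ is forward invariant, so the system is positive.

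Second I would set up the comparison. Note that because $\Sigma_2-\Sigma_1\ge 0$ with $B,C\ge 0$, the matrix $M_2:=A+B\Sigma_2 C = M_1 + B(\Sigma_2-\Sigma_1)C$ is also Metzler (a Metzler matrix plus a nonnegative matrix). Rewriting the Lur'e dynamics as $\dot x = M_2 x - B\bigl(\Sigma_2 Cx - \Phi(Cx,t)\bigr)$ and using the upper sector inequality, for $x\in\mathbb{R}_+^n$ we get the differential inequality $\dot x \le M_2 x$ componentwise. The comparison principle for Metzler systems (which follows from the same tangent-cone property applied to the difference $y-x$, where $\dot y = M_2 y$ with $y(0)=x(0)\ge 0$) yields $0\le x(t)\le y(t)$ for all $t\ge 0$.

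Third I would extract exponential decay from the hypothesis that $M_2$ is Hurwitz. Since $M_2$ is Metzler and Hurwitz, the lemma recalled earlier guarantees a vector $v>0$ with $M_2^\top v<0$; choosing $\alpha>0$ small enough that $M_2^\top v \le -\alpha v$, the linear copositive functional $V(\xi)=v^\top \xi$ satisfies $\dot V(y)\le -\alpha V(y)$ along the linear comparison trajectory, hence $v^\top y(t)\le e^{-\alpha t}\,v^\top x(0)$. Combining with the comparison bound and the equivalence of $v^\top(\cdot)$ with any norm on $\mathbb{R}_+^n$ (since $v$ has strictly positive entries) gives $\|x(t)\|\le \kappa e^{-\alpha t}\|x(0)\|$ for every $x(0)\in\mathbb{R}_+^n$ and every admissible $\Phi$, which is the claimed global exponential stability on the positive orthant.

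The main obstacle, to my mind, is justifying the comparison step rigorously for a merely locally Lipschitz, time-varying $\Phi$: one must invoke a Kamke-type comparison theorem suited to Metzler linear upper bounds with measurable time dependence, and make sure the measurability and mild boundedness assumptions stated after \eqref{eq:positiveluresystem} are strong enough. The positivity step and the Lyapunov step are then essentially mechanical consequences of the Metzler--Hurwitz structure, so the entire argument reduces to a careful application of differential inequalities rather than any delicate nonlinear analysis.
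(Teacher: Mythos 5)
The paper does not prove this theorem itself---it is quoted from \cite{drummond2022aizerman} and the proof is explicitly deferred to that reference---and your argument is essentially the standard one used there: forward invariance of $\mathbb{R}_+^n$ from the Metzler property of $A+B\Sigma_1 C$ together with the lower sector inequality, a Kamke-type comparison against the linear envelope $A+B\Sigma_2 C$, and exponential decay via a linear copositive Lyapunov function $v^\top x$ with $v>0$ supplied by the Metzler--Hurwitz structure of $A+B\Sigma_2 C$. The one point to flag is scope: since the sector condition \eqref{eq:mimo sector bound} is posited only for $z\in\mathbb{R}_+^p$, your conclusion legitimately covers initial conditions in $\mathbb{R}_+^n$ only, which is the correct reading of ``global'' exponential stability for this positive closed loop (the state space is the nonnegative orthant), so this is not a gap but worth stating explicitly.
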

\vspace{.1cm}
\begin{remark}\label{rem:stabrad}
    The Aizerman conjecture can be interpreted as a stability radius analysis of a linear system \(\dot{x} = Ax\) subjected to an additive, structured perturbation given by
    \[
    \dot{x} = Ax + B \Lambda(Cx)
    \]
    where \(\Lambda[.]\) is a placeholder for various classes of perturbations, from simple linear matrix multiplication to a nonlinear function \(\Lambda[Cz] = \Phi(Cz)\).

    The positive Aizerman conjecture asserts that if a system remains stable under all perturbations of a linear form \(\{B\Sigma C: \quad \forall \Sigma\in [\Sigma_1, \Sigma_2]\}\), then the system will also be stable for all corresponding nonlinear perturbations of the form \(\{B\Phi(Cx): \forall \Phi \in [\Sigma_1, \Sigma_2]\}\). In terms of stability radius, the positive Aizerman conjecture suggests that the "real static nonlinear stability radius," denoted \(r_{\mathbb{R}, \Phi}\), is equal to the "real linear stability radius," \(r_\mathbb{R}\). This implies that the stability radius of a Lur'e system can be calculated by the stability radius of its linearizations.
\end{remark}
\vspace{.1cm}
The proof of these results can be found in \cite{drummond2022aizerman}. The theorem and its remark establish a connection between nonlinear and linear systems. Our goal is to extend stability radius results from linear systems to nonlinear Lur'e systems.

Next, we present a fundamental result regarding the stability radius of Metzler matrices, which will be instrumental in the later stages of our analysis.

\subsection{Stability Radius of Positive Systems}
The stability radius refers to the largest amount of perturbation that a system can tolerate before it loses stability. Mathematically, it can be defined as the minimal distance (in terms of some norm) between the nominal system's matrix and the set of matrices that destabilize the system. For a system described by \(\dot x(t) = A x(t)\), the real stability radius is defined by
\begin{equation}\label{eq:rad}
    r_\mathbb{R} = \inf \left\{ \|\Delta\| : A + \Delta \text{ is unstable} \right\},
    \end{equation}
where \( A \in \mathbb{R}^{n\times n} \) is the system's nominal matrix. \( \Delta \in \mathbb{R}^{n\times n} \) represents a perturbation matrix and \( \|\Delta\| \) is a norm representing the size of the perturbation. 
Suppose the following LTI system with structured perturbation:
\begin{equation}\label{eq:disturbedsys}
    \dot{x}(t) = (A + D \Delta E)x(t),
\end{equation}
where the matrices \(A \in \mathbb{R}^{n \times n}\), \(D \in \mathbb{R}_+^{n \times m}\), and \(E \in \mathbb{R}_+^{p \times n}\) are given, and \(\Delta \in \mathbb{R}^{m \times p}\) is an unknown perturbation matrix confined to a certain set of interest.

We focus on the case where \(A\) is Metzler and stable. The stability assumption is included solely to guarantee the existence of the stability radius. Also, \(D \geq 0\) and \(E \geq 0\) are typical assumptions for systems with Metzler matrices and simplify the stability analysis.

The following theorem provides an explicit formula for the stability radius of Metzler matrices. For this class of matrices, it has been recognized that the real and complex stability radii are coinciding and given by a direct formula.

\begin{theorem}[\textbf{Metzler Stability Radius}\cite{shafai1997explicit}]\label{the:Metstabrad}
Consider the linear system \eqref{eq:disturbedsys}. Suppose that \(A\) is Metzler and stable, \(D \geq 0\) and \(E \geq 0\). Let $\|\cdot\|$ be a chosen operator norm on $\mathbb{F}^{m \times p},\mathbb{F}=\mathbb{R},\mathbb{C}$. Then, the  stability radii for system \eqref{eq:disturbedsys} under real perturbation $\Delta \in \mathbb{R}^{m\times p}$ $(r_\mathbb{R})$ or complex perturbations $\Delta \in \mathbb{C}^{m\times p}$ $(r_\mathbb{C})$ are given by
\begin{equation}\label{eq:robustradii}
\small
   r_{\mathbb{C}}(A,D,E) = r_{\mathbb{R}}(A,D,E) = \frac{1}{\| E (A)^{-1} D \|}.
\end{equation}
\end{theorem}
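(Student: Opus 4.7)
The plan is to establish the formula by sandwiching the stability radius between matching lower and upper bounds. Since real perturbations form a subset of complex ones, $r_{\mathbb R}(A,D,E) \ge r_{\mathbb C}(A,D,E)$ holds automatically, so it suffices to prove (i) $r_{\mathbb C}(A,D,E) \ge 1/\|EA^{-1}D\|$ for every complex $\Delta$, and (ii) $r_{\mathbb R}(A,D,E) \le 1/\|EA^{-1}D\|$ by exhibiting an explicit real rank-one $\Delta$ of exactly that size that destabilizes the system. Throughout, I would exploit that $A$ Hurwitz and Metzler forces $(-A)^{-1} \ge 0$ entrywise, so $\|EA^{-1}D\| = \|E(-A)^{-1}D\|$ for any standard operator $p$-norm.

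For (i), I would suppose $A + D\Delta E$ has an eigenvalue $s$ with $\mathrm{Re}(s) \ge 0$ and eigenvector $v \ne 0$. Rearranging $(sI-A)v = D\Delta E v$ and applying $E(sI-A)^{-1}$ yields $Ev = E(sI-A)^{-1}D\,\Delta(Ev)$, where $Ev \ne 0$ (otherwise $(sI-A)v = 0$ contradicts $A$ being Hurwitz). Taking norms gives $\|\Delta\| \ge 1/\|E(sI-A)^{-1}D\|$. I would then invoke the Metzler resolvent-domination bound
\[
\bigl|(sI-A)^{-1}\bigr|\ \le\ (-A)^{-1}\qquad \text{componentwise, for all } \mathrm{Re}(s)\ge 0,
\]
which follows from $(sI-A)^{-1} = \int_0^\infty e^{-st}e^{At}\,dt$ together with $e^{At}\ge 0$ for Metzler $A$. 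Because $D,E\ge 0$ and the operator norm is monotone on nonnegative matrices, this upgrades to $\|E(sI-A)^{-1}D\|\le \|E(-A)^{-1}D\|=\|EA^{-1}D\|$, completing the lower bound.

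For (ii), set $M := E(-A)^{-1}D \ge 0$. Since $M$ is nonnegative, Perron--Frobenius-type arguments ensure that for the standard $1$-, $2$-, and $\infty$-norms there exist nonnegative $x_0 \in \mathbb R^m$ and $\phi \in \mathbb R^p$ with $\|x_0\| = \|\phi\|_* = 1$ attaining $\phi^{\top} M x_0 = \|M\|$. Defining $\Delta := x_0 \phi^{\top}/\|M\|$ produces a real, entrywise-nonnegative matrix with $\|\Delta\| = 1/\|M\|$; a direct computation gives $\Delta M x_0 = x_0$, so $I_m - \Delta M$ is singular. By Sylvester's determinant identity,
\[
\det(A + D\Delta E) \;=\; \det(A)\,\det(I_m + \Delta E A^{-1} D) \;=\; \det(A)\,\det(I_m - \Delta M) \;=\; 0,
\]
so $0$ is an eigenvalue of $A + D\Delta E$ and stability is lost; together with (i) this forces both radii to equal $1/\|EA^{-1}D\|$.

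The main obstacle is precisely this Perron--Frobenius step, because it is where the coincidence $r_{\mathbb R} = r_{\mathbb C}$ is earned: for generic structured perturbations the real radius can strictly exceed the complex one, but the nonnegativity of $M = E(-A)^{-1}D$, inherited from the Metzler assumption on $A$ together with $D,E\ge 0$, forces the worst-case gain direction of $M$ to lie in $\mathbb R_+^m$, making a real destabilizing $\Delta$ available. Verifying this compatibility between the chosen operator norm and the nonnegativity of $M$ is the only delicate point; for the usual induced $p$-norms it is immediate.
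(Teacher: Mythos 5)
The paper does not prove this theorem: it is imported verbatim from \cite{shafai1997explicit}, with the proof deferred to that reference. Your argument is therefore a self-contained reconstruction rather than a variant of anything in the paper, and it is correct --- it is essentially the classical Hinrichsen--Pritchard-style proof for positive systems. The lower bound via $Ev = E(sI-A)^{-1}D\,\Delta(Ev)$ and the resolvent domination $\absElem{(sI-A)^{-1}} \le (-A)^{-1}$ for $\mathrm{Re}(s)\ge 0$ (from $e^{At}\ge 0$ for Metzler $A$) is exactly where positivity collapses the usual gap between the real and complex radii, and your rank-one construction $\Delta = x_0\phi^{\top}/\|M\|$ with nonnegative $x_0,\phi$ correctly exhibits a real perturbation of minimal norm that makes $A+D\Delta E$ singular via Sylvester's identity. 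Two small points worth making explicit: first, the theorem as stated allows an arbitrary operator norm, but both your resolvent-monotonicity step and the Perron--Frobenius step need the operator norm to be induced by monotone (absolute) vector norms --- you flag this honestly, and it is in fact the standing assumption in the cited source, so it is a clarification rather than a gap; second, your construction places the destabilized system on the stability boundary (eigenvalue at $0$), which matches the convention that the radius is an infimum over perturbations destroying asymptotic stability, so no further limiting argument is needed.
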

Furthermore, if $\Delta$ is defined by the set
$\Delta =\{S \odot \Delta : S_{ij} \geq 0\}$ with $\|\Delta\| = \max\{|\delta_{ij}|:\delta_{ij} \neq 0\}$ where $[S\odot\Delta]_{ij} = S_{ij}\delta_{ij}$ represents the Schur product, then
\begin{equation}\label{eq:extrarc}
\small
r_{\mathbb{C}}(A,D,E) = r_\mathbb{R}(A,D,E) = \frac{1}{\rho(E(-A)^{-1}DS)},
\end{equation}
where $\rho(.)$ denotes the spectral radius of the matrix.
\vspace{.2cm}

The proof of the theorem can be found in \cite{shafai1997explicit}. We use this result in our analysis of stability radius for positive Lur'e systems. 

\section{Stability Radius of Positive Lur'e Systems}
In this section, we formally define our problem and introduce our main result. We demonstrate and prove how we can utilize characteristics of Metzler matrices in linear systems for handling problems of nonlinear systems. Specifically, we study the stability radius of nonlinear Lur'e systems with structured perturbations under positivity constraints.
Consider the positive Lur'e system described by \eqref{eq:positiveluresystem},
where the nonlinearity $\Phi(.)$ satisfies the sector condition \eqref{eq:mimo sector bound} in $[\Sigma_1, \Sigma_2]$, with $\Sigma_1, \Sigma_2 \in \mathbb{R}^{m \times p}$ and $\Sigma_1 \leq \Sigma_2$.
We are interested in the robustness of this system to structured perturbations of the form:
\begin{equation}\label{eq:perturbed_system}
\small
\dot{x}(t) = \left( A + D \Delta E \right) x(t) + B \Phi(C x(t)),
\end{equation}
where $D \in \mathbb{R}_+^{n \times k}$, $E \in \mathbb{R}_+^{k \times n}$, and $\Delta \in \mathbb{R}^{k \times k}$ is an unknown perturbation matrix. Specifically, our objective is to determine the stability radius of system \eqref{eq:perturbed_system}. We denote the stability radius of the Lur'e system by
\begin{align}\label{eq:lurestabrad} \small
r_{\mathbb{R}}(\Phi) &= \inf\{\|\Delta\|: \left( A + D \Delta E \right) x(t)\\&+ B \Phi(C x(t)) \text{ is not globally asymptotically stable.}\}\nonumber
\end{align}
Before stating the first main result of the paper, we introduce a key lemma that is essential for its proof. The lemma characterizes the monotonicity of stability radius for Metzler and Hurwitz matrices.

\begin{lemma}[Monotonicity of Stability Radius]
\label{lem:StabRadiusMonotonicity}
Let $P, Q \in \mathbb{R}^{n \times n}$ be \emph{Metzler} and \emph{Hurwitz stable} matrices. Consider structured perturbations of the form:
\[
   P + D \,\Delta\,E
   \quad\text{and}\quad
   Q + D \,\Delta\,E,
\]
where the perturbation matrices $D \in \mathbb{R}_+^{n \times k}$, $E \in \mathbb{R}_+^{k \times n}$, and $\Delta \in \mathbb{R}^{k \times k}$. Given that $P \ge Q$, then
\begin{equation}\small
   r_{\mathbb{R}}(P)
   \;\le\;
   r_{\mathbb{R}}(Q),
\end{equation}
where $r_\mathbb{R}(.)$ is calculated by \eqref{eq:robustradii} with $\|\cdot\|$ being a chosen operator norm that is monotone on the cone of nonnegative matrices (e.g., $\|\cdot\|_1, \|\cdot\|_{\infty}$).
\end{lemma}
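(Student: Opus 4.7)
The plan is to reduce the monotonicity claim about the stability radius to a monotonicity property of the inverse of $M$-matrices, exploiting the explicit formula in Theorem~2 together with the fact that the chosen norm is monotone on the nonnegative cone. Since both $P$ and $Q$ are Metzler and Hurwitz, Theorem~2 applies directly, giving
\begin{equation*}
   r_{\mathbb{R}}(P) \;=\; \frac{1}{\|E(-P)^{-1}D\|}, \qquad
   r_{\mathbb{R}}(Q) \;=\; \frac{1}{\|E(-Q)^{-1}D\|}.
\end{equation*}
Therefore, the claim $r_{\mathbb{R}}(P)\le r_{\mathbb{R}}(Q)$ is equivalent to showing $\|E(-P)^{-1}D\|\ge \|E(-Q)^{-1}D\|$, so the entire argument reduces to an elementwise comparison inside a monotone norm.

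Next, I would establish the key inequality $(-P)^{-1}\ge(-Q)^{-1}\ge 0$. Set $M_P=-P$ and $M_Q=-Q$; since $P$ and $Q$ are Metzler and Hurwitz, $M_P$ and $M_Q$ are nonsingular $M$-matrices, and standard $M$-matrix theory gives $M_P^{-1}\ge 0$ and $M_Q^{-1}\ge 0$. The hypothesis $P\ge Q$ translates to $M_P\le M_Q$, i.e.\ $M_Q-M_P\ge 0$. Using the elementary identity
\begin{equation*}
   M_P^{-1}-M_Q^{-1} \;=\; M_P^{-1}\bigl(M_Q-M_P\bigr)M_Q^{-1},
\end{equation*}
the right-hand side is a product of three nonnegative matrices and is therefore itself nonnegative. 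This yields $M_P^{-1}\ge M_Q^{-1}\ge 0$, which is exactly $(-P)^{-1}\ge(-Q)^{-1}\ge 0$.

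Combining this with the hypotheses $D\ge 0$ and $E\ge 0$, left- and right-multiplication by nonnegative matrices preserves the elementwise order, so
\begin{equation*}
   E(-P)^{-1}D \;\ge\; E(-Q)^{-1}D \;\ge\; 0.
\end{equation*}
Because $\|\cdot\|$ is assumed to be monotone on the cone of nonnegative matrices (as holds, e.g., for $\|\cdot\|_1$ and $\|\cdot\|_\infty$), we obtain $\|E(-P)^{-1}D\|\ge\|E(-Q)^{-1}D\|$, and taking reciprocals gives the desired inequality.

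The only step that requires any real content is the elementwise monotonicity of the $M$-matrix inverse; everything else is bookkeeping with Theorem~2 and monotonicity of the norm. This step is standard, but it is where the Metzler hypothesis on both $P$ and $Q$ is genuinely used: without it one cannot guarantee that $M_P^{-1}$ and $M_Q^{-1}$ are nonnegative, and the three-factor identity no longer produces a nonnegative matrix. A minor care-point is simply to confirm that the norm in the hypothesis is required to be monotone on the nonnegative cone, which is explicitly assumed in the statement.
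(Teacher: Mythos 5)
Your proof is correct and follows essentially the same route as the paper's: apply the explicit formula from Theorem~\ref{the:Metstabrad}, observe that $-P$ and $-Q$ are nonsingular \textit{M}-matrices with $(-P)^{-1}\ge(-Q)^{-1}\ge 0$, and then use nonnegativity of $D,E$ and monotonicity of the norm. The only difference is that you explicitly verify the inverse-monotonicity step via the identity $M_P^{-1}-M_Q^{-1}=M_P^{-1}(M_Q-M_P)M_Q^{-1}$, where the paper simply cites this as a classical property of nonsingular \textit{M}-matrices.
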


\begin{proof}
Since $P$ is Metzler and Hurwitz, $-P$ has nonpositive off-diagonals and eigenvalues with positive real parts. This implies that $-P$ is an \textit{M}-Matrix. The same applies to $-Q$.

Given $P \ge Q$, we have $-P \le -Q$. Therefore, Based on a classical property of nonsingular \textit{M}-matrices 
\[
 (-P)^{-1},(-Q)^{-1}\ge0 \quad \text{and}\quad  (-P)^{-1} \;\ge\; (-Q)^{-1}.
\]
Moreover, since $D$ and $E$ are nonnegative, left and right multiplication by these matrices preserves the order for nonnegative matrices:
\begin{equation*}\small
   E\,(-P)^{-1}D
   \;\ge\;
   E\,(-Q)^{-1}D.
\end{equation*}
Since we use a monotone operator norm $\|\cdot\|$, it follows that
\begin{equation*}\small
   \bigl\|\,E\,(-P)^{-1}D\bigr\|
   \;\ge\;
   \bigl\|\,E\,(-Q)^{-1}D\bigr\|.
\end{equation*}
Taking reciprocals and removing the negative sign inside the norm yield the following inequality for the stability radii:
\begin{equation*}\small
   r_{\mathbb{R}}(P)
   \;=\;
   \frac{1}{\bigl\|\,E\,(P)^{-1}D\bigr\|}
   \;\;\le\;\;
   \frac{1}{\bigl\|\,E\,(Q)^{-1}D\bigr\|}
   \;=\;
   r_{\mathbb{R}}(Q).
\end{equation*}
\end{proof}

\begin{remark}
We can also explain Lemma \ref{lem:StabRadiusMonotonicity} in different words. It is well known that the interval positive systems $[Q,P]$ are Hurwitz stable if and only if $P$ is Hurwitz stable. This has been established in the theorem $4$ of reference \cite{shafai1997explicit} for positive discrete time systems, which is equally applicable for continuous-time positive systems, as it is shown in \cite{shafai1990necessary}.
Now, without loss of generality, assume $D = E = I$. Since $P>Q$ and $P$ is asymptotically stable, it is closest to instability due to the continuity of stable eigenvalues\cite[Lemma 3.5]{bill2016stability}. Thus, the distance of $P$ from the set of singular matrices is given by $\|P^{-1}\|^{-1}$, which is the smallest singular value of $P$ with respect to $2-$norm. In other words,
$r_\mathbb{C}=r_\mathbb{R} = \sigma_{\min} =\|P^{-1}\|^{-1},$
where $\sigma$ denotes the singular value of matrix $P$.
\end{remark}

Now, with the aid of Lemma \ref{lem:StabRadiusMonotonicity}, we introduce the following theorem that provides a result on the stability radius of positive Lur'e systems subjected to structured perturbations.

\begin{theorem}\label{theorem:stability_radius}
Consider a Lur'e system described in \eqref{eq:perturbed_system}. Given that $A + B \Sigma_1 C$ is Metzler and $A + B \Sigma_2 C$ is Hurwitz,
 the stability radius of positive Lur'e system $r_{\mathbb{R}}(\Phi)$ with respect to the structured perturbation $D\Delta E$ is given by
\begin{equation}\label{eq:stability_radius}
\small
r_{\mathbb{R}}(\Phi) = \frac{1}{\left\| E (A + B \Sigma_2 C)^{-1} D \right\|}.
\end{equation}
\vspace{.01cm}
\end{theorem}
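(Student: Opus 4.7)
The plan is to prove the formula by a two-sided argument, combining the Positive Aizerman Conjecture (Theorem \ref{the:positiveaizerman}), the Metzler Stability Radius result (Theorem \ref{the:Metstabrad}), and the Monotonicity Lemma (Lemma \ref{lem:StabRadiusMonotonicity}). First I would establish the upper bound $r_{\mathbb{R}}(\Phi) \le 1/\|E(A+B\Sigma_2 C)^{-1}D\|$ by exhibiting a single admissible nonlinearity that witnesses the bound. Taking the linear choice $\Phi(z) = \Sigma_2 z$, which trivially lies in the sector $[\Sigma_1,\Sigma_2]$, reduces the perturbed closed loop to $\dot x = (A + B\Sigma_2 C + D\Delta E)x$. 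Since $A + B\Sigma_2 C$ is Metzler and Hurwitz, Theorem \ref{the:Metstabrad} applies directly and identifies the smallest destabilizing $\Delta$ as having norm exactly $1/\|E(A+B\Sigma_2 C)^{-1}D\|$. Because this $\Phi$ is one valid selection inside the sector, the Lur'e stability radius cannot exceed this value.

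The reverse inequality is the substantive half. Fix any $\Delta$ with $\|\Delta\| < 1/\|E(A+B\Sigma_2 C)^{-1}D\|$; I would show that the perturbed Lur'e system is globally exponentially stable for every $\Phi \in [\Sigma_1,\Sigma_2]$. The key observation is that, since $B \ge 0$ and $C \ge 0$, for every $\Sigma \in [\Sigma_1,\Sigma_2]$ we have the sandwich $A + B\Sigma_1 C \le A + B\Sigma C \le A + B\Sigma_2 C$. The lower bound being Metzler forces every matrix in the interval to be Metzler, and the upper bound being Hurwitz forces every matrix in the interval to be Hurwitz, by the standard comparison principle for Metzler matrices. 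Applying Lemma \ref{lem:StabRadiusMonotonicity} to $A + B\Sigma C \le A + B\Sigma_2 C$ then yields $r_{\mathbb{R}}(A + B\Sigma C) \ge r_{\mathbb{R}}(A + B\Sigma_2 C)$ uniformly in $\Sigma$, so the chosen $\Delta$ leaves every linearization in the sector Hurwitz. In effect, the upper sector endpoint is the worst-case linearization for the stability radius.

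To close the argument, I would invoke the Positive Aizerman Conjecture on the perturbed Lur'e system whose nominal matrix is $A + D\Delta E$: one checks that $A + D\Delta E + B\Sigma_1 C$ is Metzler and $A + D\Delta E + B\Sigma_2 C$ is Hurwitz, and concludes global exponential stability for every $\Phi$ in the sector. The Hurwitz condition at the upper endpoint is exactly what the previous paragraph delivered via Theorem \ref{the:Metstabrad} and monotonicity. The main obstacle, and the subtle point of the argument, is justifying the Metzler condition at the lower endpoint: a generic real $\Delta$ with the admissible norm could, in principle, introduce negative off-diagonal entries through $D\Delta E$ and violate the Metzler property. Resolving this cleanly relies on the positive-systems setting, in which the destabilizing worst-case perturbations delivered by Theorem \ref{the:Metstabrad} can be taken with a sign pattern compatible with the Metzler cone (essentially by the Perron--Frobenius structure underlying the formula), so that the Metzler property is preserved up to the very norm $1/\|E(A+B\Sigma_2 C)^{-1}D\|$ at which the Hurwitz condition fails. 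Once that point is settled, Theorem \ref{the:positiveaizerman} closes the proof and the two bounds meet.
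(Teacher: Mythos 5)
Your proposal follows the same route as the paper's proof --- reduce the nonlinear problem to the family of linearizations $A+B\Sigma C$, $\Sigma\in[\Sigma_1,\Sigma_2]$, via the positive Aizerman result, identify $\Sigma_2$ as the worst case via Lemma \ref{lem:StabRadiusMonotonicity}, and read off the radius from Theorem \ref{the:Metstabrad} --- but you organize it as an explicit two-sided bound. The upper-bound half (take $\Phi(z)=\Sigma_2 z$ as a witness) is clean and is implicit in the paper's equation \eqref{eq:eqmid1}; your sandwich argument for why every $A+B\Sigma C$ in the interval is Metzler and Hurwitz is also a step the paper asserts without comment. So far the two arguments buy the same thing, with yours being the more carefully stated version.

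The place where you go beyond the paper is also the place where neither argument actually closes. You correctly flag that invoking Theorem \ref{the:positiveaizerman} on the perturbed system with nominal matrix $A+D\Delta E$ requires $A+D\Delta E+B\Sigma_1 C$ to be Metzler, and that a generic real $\Delta$ of admissible norm can destroy this through negative off-diagonal contributions of $D\Delta E$. This is a genuine obstruction, and the paper's proof steps over it silently (its appeal to Remark \ref{rem:stabrad} concerns perturbations of the form $B\Lambda(Cx)$, not the separate structure $D\Delta E$). However, your proposed patch does not resolve it: the observation that the \emph{worst-case destabilizing} perturbation can be chosen with a nonnegative (Perron--Frobenius) sign pattern speaks to where the infimum in \eqref{eq:lurestabrad} is attained, i.e.\ to the upper bound, which you already have. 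The lower bound requires global exponential stability for \emph{every} sub-critical $\Delta$, including sign-indefinite ones for which the Metzler hypothesis of Theorem \ref{the:positiveaizerman} fails, and for those the theorem simply cannot be invoked as stated. Closing this would need an additional comparison step --- e.g.\ majorizing $D\Delta E$ elementwise by $D\absElem{\Delta}E$ and arguing that stability of the majorant positive system dominates --- which neither you nor the paper supplies. In short: same approach, more honest bookkeeping, and you have located the one real gap, but the sentence meant to fill it argues the wrong direction.
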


This value represents the norm of the largest allowable perturbation $\Delta$ such that the perturbed system remains globally exponentially stable.

\begin{proof}
Based on positive Aizerman conjecture and remark \ref{rem:stabrad}, the stability radius of the nonlinear Lur'e system can be measured by the stability radius of all the linear systems of the form $A+B\Sigma C, \Sigma\in[\Sigma_1,\Sigma_2]$ where $\Phi$ is sector bounded in the same interval. In other words
\begin{equation}\label{eq:eqmid1}\small
r_{\mathbb{R}}(\Phi) = \min_{\Sigma\in[\Sigma_1,\Sigma_2]} r_{\mathbb{R}}(A+B\Sigma C).
\end{equation}
Since $\forall\Sigma \in [\Sigma_1,\Sigma_2] \quad A + B \Sigma C$ is Hurwitz and Metzler and $B,C \geq 0$, we can utilize the result of Lemma \ref{lem:StabRadiusMonotonicity} and say
\begin{align}
\min_{\Sigma\in[\Sigma_1,\Sigma_2]} r_{\mathbb{R}}(A+&B\Sigma C) = r_{\mathbb{R}}(\max_{\Sigma\in[\Sigma_1,\Sigma_2]}(A+B\Sigma C))\nonumber\\
&=r_\mathbb{R}(A+B\Sigma_2 C). \label{eq:eqmid2}
\end{align}
Now, \eqref{eq:eqmid1} and \eqref{eq:eqmid2} result in:
\begin{equation*}\small
r_{\mathbb{R}}(\Phi) = \frac{1}{\left\| E (A + B \Sigma_2 C)^{-1} D \right\|}.
\end{equation*}
 
Since the nonlinear function $\Phi$ lies in the sector $[\Sigma_1, \Sigma_2]$, and $A + B \Sigma_2 C$ is Hurwitz, based on the implications of Remark \ref{rem:stabrad}, the same stability radius applies to the Lur'e system.
\end{proof}

It should be pointed out that the above theorem can also be stated in terms of other uncertainty structures, such as the one stated in Theorem \ref{the:Metstabrad} with the aid of equation \eqref{eq:extrarc}.

\section{Stability Radius of Lur'e Based Neural Network Feedback Loops}
In this section, we introduce a sector bound for fully connected FFNNs. This formulation enables the transformation of an NN-controlled system into the framework of a Lur'e system. By doing so, we can leverage established results from Lur'e system theory to assess the robust stability of systems controlled by NNs.
Consider a Lur'e system consisting of an LTI system \eqref{eq:generallti} and an FFNN in feedback loop denoted by \(\pi : \mathbb{R}^p \mapsto \mathbb{R}^m\). Suppose an FFNN with \(q\) layers described by the following equations:

\vspace{-.3cm}
\begin{subequations}\label{eq:NNcontroller}
\small
\begin{align}
    &\omega^{(0)}(t) = Cx(t), \\
    &\omega^{(i)}(t) = \phi^{(i)}( W^{(i)}\omega^{(i-1)}(t) + b^{(i)}), \quad i = 1, \dots, q, \label{eq:qthlayer} \\
    &u(t) = W^{(q+1)}\omega^{(q)}(t) + b^{(q+1)},
\end{align}
\end{subequations}

where \(\omega^{(i)}(t) \in \mathbb{R}^{l_i}\) represents the output of the \(i\)-th layer, with \(l_0 = p\) and \(l_{q+1} = m\), dictated by the dimensions of the LTI system’s input and output.

Each layer in the network is characterized by a weight matrix \(W^{(i)} \in \mathbb{R}^{l_i \times l_{i-1}}\), a bias vector \(b^{(i)} \in \mathbb{R}^{l_i}\), and an activation function \(\phi^{(i)}\), which is applied elementwise to its argument.

The set \((x_*, y_*, u_*)\) denotes the equilibrium state of the Lur'e system and satisfies the following equation:
\vspace{-.4cm}

{\small
\begin{equation*}
\mathbf{0}_n = A x_*(t) + B u_*(y_*), \quad y_*(t) = C x_*(t), \quad u_*(t) = \pi(Cx_*(t)).
\end{equation*}}

We now present a result regarding the sector bounds of general FFNNs as in \eqref{eq:NNcontroller}.
We assume the following properties:

\textit{\textbf{Property 1:}}
\begin{itemize}
    \item For simplicity, the activation functions for all neurons are assumed identical, and the biases are set to zero. Both of these conditions can be relaxed with minor modifications to the notation.
    \item The selected activation function lies within a sector \([a_1, a_2]\), where \(a_1 < a_2\)\footnote{Most famous NN activation functions are sector bounded in $[0,1]$ \cite{fazlyab2021introduction}.}. Define \(c = \max(|a_1|, |a_2|)\).
\end{itemize}

\begin{theorem}[\textbf{FFNN Sector Bound}\cite{10561541}]\label{the:NNsect}
Consider a fully connected FFNN \(\pi(\cdot)\) with \(q\) layers as defined in \eqref{eq:NNcontroller} satisfying \textit{\textbf{Property 1}}. The network accepts an input vector \(z \in \mathbb{R}^p_+\) and produces an output \(u(z) \in \mathbb{R}^m\). The given NN is sector bounded within \([\Gamma_1, \Gamma_2]\), where
\begin{equation}\label{eq:gammas}
\small
\Gamma_1 = -c^q \left( \prod_{i=1}^{q+1} |W^{(i)}|_{\text{abs}} \right), \quad \Gamma_2 = c^q \left( \prod_{i=1}^{q+1} |W^{(i)}|_{\text{abs}} \right),
\end{equation}
with \(|W^{(i)}|_{\text{abs}}\) being the elementwise absolute value of \(W^{(i)}\).
\end{theorem}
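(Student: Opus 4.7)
The plan is to proceed by induction on the depth $q$, propagating a layerwise componentwise bound from the input through to the output. The two basic ingredients are (i) a scalar bound on the activation and (ii) the elementary inequality $|Mv| \le |M|_{\text{abs}}\,|v|$ for any matrix $M$ and vector $v$, interpreted elementwise.

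First I would translate the sector assumption on $\phi$ into a usable pointwise inequality. Since $\phi$ lies in the sector $[a_1,a_2]$ with $c = \max(|a_1|,|a_2|)$, we have $|\phi(s)| \le c\,|s|$ for every scalar $s$ (this works whether $s$ is positive or negative because $\phi(s)$ is trapped between $a_1 s$ and $a_2 s$). Applied elementwise to a vector argument, this becomes
\[
   \bigl|\phi^{(i)}(v)\bigr| \;\le\; c\,|v|
   \qquad \text{componentwise}.
\]

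Next, I would set $\omega^{(0)} = Cx$ (the NN input) with the hypothesis $z := \omega^{(0)} \in \mathbb{R}^p_+$. The inductive claim to carry layer by layer is
\[
   \bigl|\omega^{(k)}\bigr|
   \;\le\;
   c^{k}\,\Bigl(\prod_{i=1}^{k}\bigl|W^{(i)}\bigr|_{\text{abs}}\Bigr)\, z,
   \qquad k = 1,\dots,q.
\]
The base case $k=1$ follows from $|\omega^{(1)}| = |\phi^{(1)}(W^{(1)}z)| \le c\,|W^{(1)}z| \le c\,|W^{(1)}|_{\text{abs}}\,z$, using $z \ge 0$ so that $|z|=z$. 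The inductive step simply applies $|\phi^{(k+1)}(W^{(k+1)}\omega^{(k)})| \le c\,|W^{(k+1)}|_{\text{abs}}\,|\omega^{(k)}|$ and then plugs in the inductive hypothesis, noting that $|W^{(k+1)}|_{\text{abs}}$ is nonnegative so it preserves the componentwise inequality.

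Finally, for the output layer (which has no activation), $u = W^{(q+1)}\omega^{(q)}$ gives
\[
   |u| \;\le\; \bigl|W^{(q+1)}\bigr|_{\text{abs}}\,\bigl|\omega^{(q)}\bigr|
         \;\le\; c^{q}\,\Bigl(\prod_{i=1}^{q+1}\bigl|W^{(i)}\bigr|_{\text{abs}}\Bigr)\, z
         \;=\; \Gamma_2\,z .
\]
The componentwise bound $|u|\le\Gamma_2 z$ is equivalent to $-\Gamma_2 z \le u \le \Gamma_2 z$, which is precisely $\Gamma_1 z \le \pi(z) \le \Gamma_2 z$ as claimed. I do not expect any real obstacle: the only subtle point is carefully justifying that $|\phi(s)|\le c|s|$ for both signs of $s$, and that the elementwise inequality is preserved under left multiplication by nonnegative matrices such as $|W^{(i)}|_{\text{abs}}$. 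The nonnegativity of the input $z$ (guaranteed because $C \ge 0$ and $x \ge 0$ in the positive setting) is what lets us replace $|z|$ with $z$ at the final step and obtain a symmetric sector around zero.
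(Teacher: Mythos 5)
Your proof is correct. Note that the paper itself does not prove this theorem --- it defers entirely to the cited reference \cite{10561541} --- but your argument (translating the scalar sector condition into $|\phi(s)|\le c|s|$, then propagating the componentwise bound $|\omega^{(k)}|\le c^{k}\bigl(\prod_{i=1}^{k}\absElem{W^{(i)}}\bigr)z$ through the layers via $|Mv|\le \absElem{M}\,|v|$ and the nonnegativity of $z$ and of $\absElem{W^{(i)}}$) is the standard derivation of exactly this bound, and I see no gap in it. The only point worth making explicit is that the matrix product must be taken in the dimensionally consistent order $\absElem{W^{(q+1)}}\cdots\absElem{W^{(1)}}$, which your left-multiplication in the inductive step already produces.
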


This result ensures that the output of the FFNN is constrained within a sector. The proof can be found in \cite{10561541}.

The proposed sector bounds enable us to utilize the results from Lur'e system analysis for robust stability analysis of uncertain NN-controlled systems. The following theorem establishes an explicit formula for the stability radius of linear systems with NNs in their feedback loop.
\begin{theorem}\label{corollary:NN_stability_radius} Consider system \eqref{eq:perturbed_system} in which the nonlinearity $\Phi$ is realized by a fully connected FFNN $\pi: \mathbb{R}^p \rightarrow \mathbb{R}^m$, satisfying the sector condition within $[\Gamma_1, \Gamma_2]$, as established in \eqref{eq:gammas}.
Given $A + B \Gamma_1 C$ is Metzler and $A + B \Gamma_2 C$ is Hurwitz, the stability radius $r_{\mathbb{f}}$ of the NN-controlled system with respect to the structured perturbation $D\Delta E$ is
\begin{equation}\label{eq:NN_stability_radius}
\small
r_{\mathbb{f}} = \frac{1}{\left\| E (A + B \Gamma_2 C)^{-1} D \right\|}.
\end{equation}
This value represents the maximal perturbation norm under which the NN-controlled system remains globally exponentially stable.
\end{theorem}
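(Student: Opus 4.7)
The plan is to recognize that Theorem \ref{corollary:NN_stability_radius} is essentially a direct specialization of Theorem \ref{theorem:stability_radius} to the case where the static nonlinearity $\Phi$ happens to be implemented by a fully connected FFNN. So the proof strategy is to show that the hypotheses of Theorem \ref{theorem:stability_radius} are satisfied with $\Sigma_1 = \Gamma_1$ and $\Sigma_2 = \Gamma_2$, and then invoke its conclusion.

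The first step is to apply Theorem \ref{the:NNsect} to the FFNN $\pi$. Under \textbf{Property 1}, this gives us that $\pi$ is sector bounded in $[\Gamma_1,\Gamma_2]$ with $\Gamma_1,\Gamma_2$ defined by \eqref{eq:gammas}; that is, $\Gamma_1 z \leq \pi(z) \leq \Gamma_2 z$ for all $z \in \mathbb{R}^p_+$. Hence the perturbed closed-loop dynamics \eqref{eq:perturbed_system} with $\Phi = \pi$ is an instance of the positive Lur'e system studied in Section III, with sector matrices $\Sigma_1 = \Gamma_1$ and $\Sigma_2 = \Gamma_2$. The hypotheses that $A + B\Gamma_1 C$ is Metzler and $A + B\Gamma_2 C$ is Hurwitz are exactly those required by Theorem \ref{theorem:stability_radius}. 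Also, $B \in \mathbb{R}_+^{n\times m}$, $C \in \mathbb{R}_+^{p\times n}$, $D \in \mathbb{R}_+^{n\times k}$, and $E \in \mathbb{R}_+^{k\times n}$ are preserved from the underlying positive LTI plant.

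The second step is then a direct invocation of Theorem \ref{theorem:stability_radius}: the stability radius of the Lur'e system with sector $[\Gamma_1,\Gamma_2]$ under the structured perturbation $D\Delta E$ equals
\[
r_{\mathbb{f}} \;=\; r_{\mathbb{R}}(\pi) \;=\; \frac{1}{\bigl\| E(A + B\Gamma_2 C)^{-1} D \bigr\|},
\]
which is precisely \eqref{eq:NN_stability_radius}. By the discussion following Theorem \ref{theorem:stability_radius}, this number is the largest $\|\Delta\|$ for which global exponential stability is preserved, giving the claimed interpretation.

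The proof is largely bookkeeping, so I do not expect a significant obstacle, but the one point that deserves explicit comment is why applying the sector bound to $\pi$ is legitimate here: Theorem \ref{the:NNsect} requires the input to lie in $\mathbb{R}_+^p$, and the Lur'e argument in Theorem \ref{theorem:stability_radius} in turn relies on this positivity to transfer stability from the worst-case linear closed-loop $A + B\Gamma_2 C$ to the nonlinear system via the positive Aizerman conjecture (Theorem \ref{the:positiveaizerman}). Since $C \geq 0$ and the closed-loop dynamics keep $x(t) \geq 0$ (because $A + B\Gamma_1 C$ is Metzler, $B, \Gamma_2 - \Gamma_1, C \geq 0$, and $D\Delta E$ is absorbed in the linearization analysis), the argument $Cx(t)$ feeding the FFNN remains in $\mathbb{R}_+^p$, so the sector bound is applicable along trajectories and the reduction to Theorem \ref{theorem:stability_radius} is valid.
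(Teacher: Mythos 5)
Your proposal is correct and matches the paper's own argument: the paper likewise proves this theorem by substituting $\Sigma_1 = \Gamma_1$, $\Sigma_2 = \Gamma_2$ (justified by Theorem \ref{the:NNsect}) and repeating the steps of Theorem \ref{theorem:stability_radius}, invoking Lemma \ref{lem:StabRadiusMonotonicity} to conclude that the minimum over $\Gamma \in [\Gamma_1,\Gamma_2]$ is attained at $\Gamma_2$. Your closing remark on why $Cx(t) \in \mathbb{R}^p_+$ makes the sector bound applicable is a detail the paper leaves implicit, but it does not change the route.
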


\begin{proof}
Given the structure of $\Gamma_2$ as described in Theorem \ref{the:NNsect} and the fact that $\Gamma_2 \geq \Gamma_1$, the proof follows the same steps as the proof of Theorem \ref{theorem:stability_radius} by substituting $\Sigma_2$ with $\Gamma_2$. Since the FFNN satisfies the sector condition within $[\Gamma_1, \Gamma_2]$ and $A + B \Gamma_2 C$ is Metzler and Hurwitz, we can apply the same analysis. Finally, we can use Lemma \ref{lem:StabRadiusMonotonicity} to say
\begin{equation}\label{eq:eqmidnn}
\small
     \min_{\Gamma\in[\Gamma_1,\Gamma_2]} r_{\mathbb{R}}(A+B\Gamma C) = \frac{1}{\left\| E (A + B \Gamma_2 C)^{-1} D \right\|}.
\end{equation}
\end{proof}

\begin{corollary}\label{cor:sectorbound}
If the exact upper sector bound $\Gamma_2$ for the NN $\pi$ is unknown or conservative, it can be refined using the exact stability radius $r_{\mathbb{f}}$ as
\begin{equation}\label{eq:refined_gamma}
\small
\|\Gamma_{2,\text{refined}}\| = \frac{1}{\left\| C (A + D \Delta E)^{-1} B \right\|},
\end{equation}
where $\Delta = r_{\mathbb{f}}$.

This analysis is very useful in the case of SISO systems, where $\|.\| = \absElem{.}$, limiting the choice of $\Gamma_{2,\text{refined}}$ to $\pm\|\Gamma_{2,\text{refined}}\|$.
In this methodology, one can find the destabilizing $\Delta = r_{\mathbb{f}}$, by incrementally increasing the perturbation to an NN-controlled system and recording the critical perturbation. This critical amount can be used in \eqref{eq:refined_gamma} to calculate the tightest sector bound for the NN.
This estimation provides a tighter upper bound $\Gamma_{2,\text{refined}}$, potentially improving the stability margin of the NN-controlled system.
\end{corollary}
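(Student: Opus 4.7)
The proof strategy is to exploit the inherent symmetry in the Metzler stability radius formula (Theorem \ref{the:Metstabrad}) under exchange of the perturbation and nominal channels. In the linearized closed-loop matrix $A + B\Gamma C + D\Delta E$, two additive structured perturbations coexist — the NN sector term $B\Gamma C$ and the structured uncertainty term $D\Delta E$ — and either can be absorbed into a new nominal matrix while the other is treated as the perturbation. The identity in \eqref{eq:refined_gamma} is precisely what Theorem \ref{the:Metstabrad} yields after this role swap, with $\Delta$ fixed at the critical value $r_{\mathbb{f}}$.

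First, I would recall from Theorem \ref{corollary:NN_stability_radius} and the monotonicity established in Lemma \ref{lem:StabRadiusMonotonicity} that at the critical structured perturbation $\|\Delta\| = r_{\mathbb{f}}$, the worst case in the sector $[\Gamma_1,\Gamma_2]$ is attained at the upper endpoint $\Gamma_2$, and the matrix $A + B\Gamma_2 C + D\Delta E$ lies exactly on the boundary of Hurwitz stability. In other words, $A + D\Delta E$, viewed now as a new Metzler and Hurwitz nominal matrix, is driven to marginal stability precisely by the additive perturbation $B\Gamma_2 C$.

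Next, I would apply Theorem \ref{the:Metstabrad} with the substitutions $A' \leftarrow A + D\Delta E$, $D' \leftarrow B$, $E' \leftarrow C$, and $\Delta' \leftarrow \Gamma_2$. Formula \eqref{eq:robustradii} then gives the stability radius of $A + D\Delta E$ with respect to perturbations of the form $B\Gamma C$ as $1/\|C(A + D\Delta E)^{-1} B\|$. Since $\Gamma_{2,\text{refined}}$ is by construction the critical sector endpoint reaching this radius, identity \eqref{eq:refined_gamma} follows immediately. For the SISO specialization, the norm collapses to an absolute value, and the two admissible signs $\pm\|\Gamma_{2,\text{refined}}\|$ simply encode whether the destabilizing contribution enters from above or below the nominal behavior, as noted in the statement.

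The main obstacle I anticipate is verifying that the hypotheses of Theorem \ref{the:Metstabrad} carry over to the reinterpreted nominal matrix $A + D\Delta E$: that it remains Metzler and Hurwitz at the critical value of $\Delta$. Hurwitzness follows from continuity of the stability radius formula in its matrix arguments, passing to the limit $\|\Delta\| \uparrow r_{\mathbb{f}}$ where the closed-loop system is still stable. The Metzler property is less automatic, since a generic $\Delta$ may have negative entries and $D\Delta E$ need not be elementwise nonnegative; the cleanest treatment is to restrict attention to sign patterns of $\Delta$ that are compatible with the positive structure of the channels $(D,E)$, which is natural in the SISO setting emphasized by the corollary, where the sign of the destabilizing $\Delta$ is tracked directly from the incremental experiment.
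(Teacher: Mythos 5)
The paper gives no formal proof of Corollary~\ref{cor:sectorbound}; its justification is purely procedural and embedded in the statement itself. Your role-swap reading is clearly the intended one: equation \eqref{eq:refined_gamma} is literally formula \eqref{eq:robustradii} with the channels $(D,\Delta,E)$ and $(B,\Gamma,C)$ exchanged, applied at the marginally destabilizing value of $\Delta$. As a reconstruction of the argument the paper leaves implicit, your proposal is on target, and you correctly flag that the hypotheses of Theorem~\ref{the:Metstabrad} do not transfer automatically after the swap.

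Two gaps remain, though. First, your continuity argument establishes Hurwitzness of the wrong matrix: what survives in the limit $\|\Delta\|\uparrow r_{\mathbb{f}}$ is stability of the full closed loop $A + B\Gamma_2 C + D\Delta E$, not of the new nominal $A + D\Delta E$. The paper deliberately does not assume $A$ itself is Metzler or Hurwitz (see the ``Limitation'' paragraph, and the first example where $A$ is open-loop unstable); only $A+B\Gamma_1 C$ is Metzler and $A+B\Gamma_2 C$ is Hurwitz. After the swap, Theorem~\ref{the:Metstabrad} therefore need not apply to $A+D\Delta E$ at all, and $1/\left\| C (A + D \Delta E)^{-1} B \right\|$ is then only a distance to singularity along the $(B,C)$ structure rather than a stability radius; you need to add Metzler and Hurwitz stability of $A+D\Delta E$ as explicit hypotheses, not dispose of them by continuity. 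Second, identifying the empirically observed critical perturbation with the radius $r_{\mathbb{f}}$ computed at the \emph{true} sector endpoint presupposes that this particular NN attains the worst case of its sector, i.e., that the Aizerman-type equivalence is tight for the specific nonlinearity and not merely for the linear family $\{B\Gamma C\}$. Without that, the back-solved $\Gamma_{2,\text{refined}}$ is the sector endpoint whose linear radius equals the observed $\Delta$, but it need not remain a valid over-approximation of the NN's gain --- the paper's own Fig.~\ref{fig:uppergamma} shows the NN output escaping the refined sector. Both assumptions should appear explicitly in your argument rather than being absorbed into the SISO sign discussion.
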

\section{Examples}
In this section, we provide illustrative examples of the stability radius analysis using the results of the previous sections. We consider two cases: (1) a Lur'e system with structured uncertainty and (2) an instance of an NN-controlled system with structured uncertainty where the exact sector bound for the NN is unknown. We demonstrate that the sector bounds obtained using \eqref{eq:gammas} for the NN are conservative in this case. To address this, we refine these bounds and derive the tightest sector bounds, providing a more accurate analysis.

\subsection{Lur'e System with Structured Perturbations}
Consider a linear Metzler, open-loop unstable system characterized by the following matrices:
\begin{equation*}\small
A = \begin{bmatrix}
-5 & 5 & 1 \\
6  & -7 & 1\\
2 & 1 & -5
\end{bmatrix}, \quad 
B = \begin{bmatrix}
1 \\1\\1
\end{bmatrix}, \quad
C = \begin{bmatrix}
1 & 1 & 1
\end{bmatrix}.
\end{equation*}

A static nonlinear function $f(y) = -1.5y + 0.01y^3 + \sin(2y)$ tailored to be sector bounded in the interval \([\Sigma_1,\Sigma_2] = [-2,-0.48]\) was added to the feedback loop of the system. A schematic of this static nonlinear feedback is shown in Fig. \ref{fig:nonlin}. 
\begin{figure}
    \centering
    \includegraphics[width=.6\linewidth]{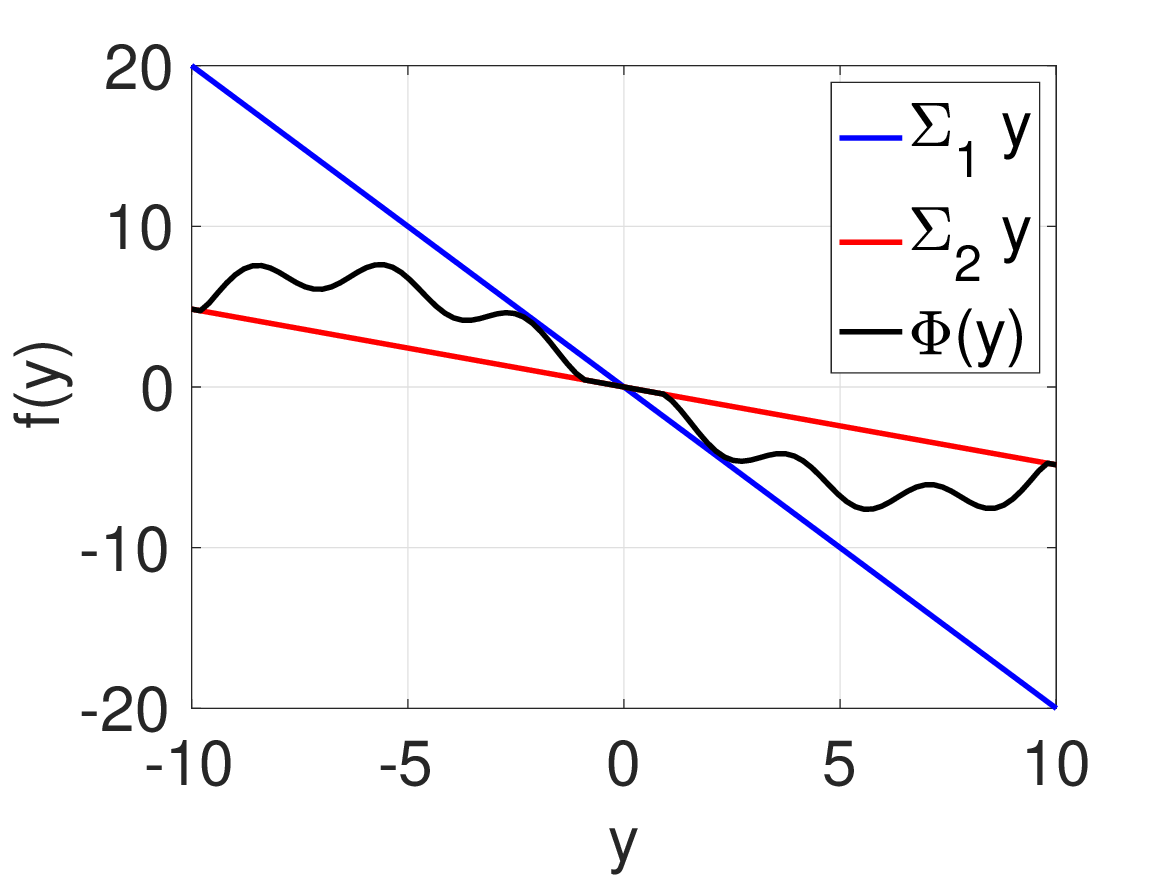}
    \caption{\small A schematic of the static nonlinearity tailored to lie within the sector bound $[\Sigma_1,\Sigma_2]$.\vspace{-.5cm}}
    \label{fig:nonlin}
\end{figure}
The closed-loop system satisfies the conditions of Theorem \ref{theorem:stability_radius} with this \(\Sigma_1\) and \(\Sigma_2\). For such a system, given \(D = \begin{bmatrix}
    1 & 0.5 & 1
\end{bmatrix}^T\), and \(E = \begin{bmatrix}
    0.5 & 1 & 1
\end{bmatrix}\), we expect the stability radius to be
\vspace{-.1cm}

{\small\[
r_{\mathbb{R}}(\Phi) = \frac{1}{\left\| E(A + B \Sigma_2 C)^{-1} D \right\|_2} = 0.26.
\]}

Fig. \ref{fig:Luresys} depicts the system's random trajectories as it transitions from stability to instability due to increasing structured perturbation. It is shown that the system remains exponentially stable until the threshold \(\Delta = 0.26\), beyond which it loses exponential stability.
\begin{figure}
    \centering
    \includegraphics[width=.9\linewidth]{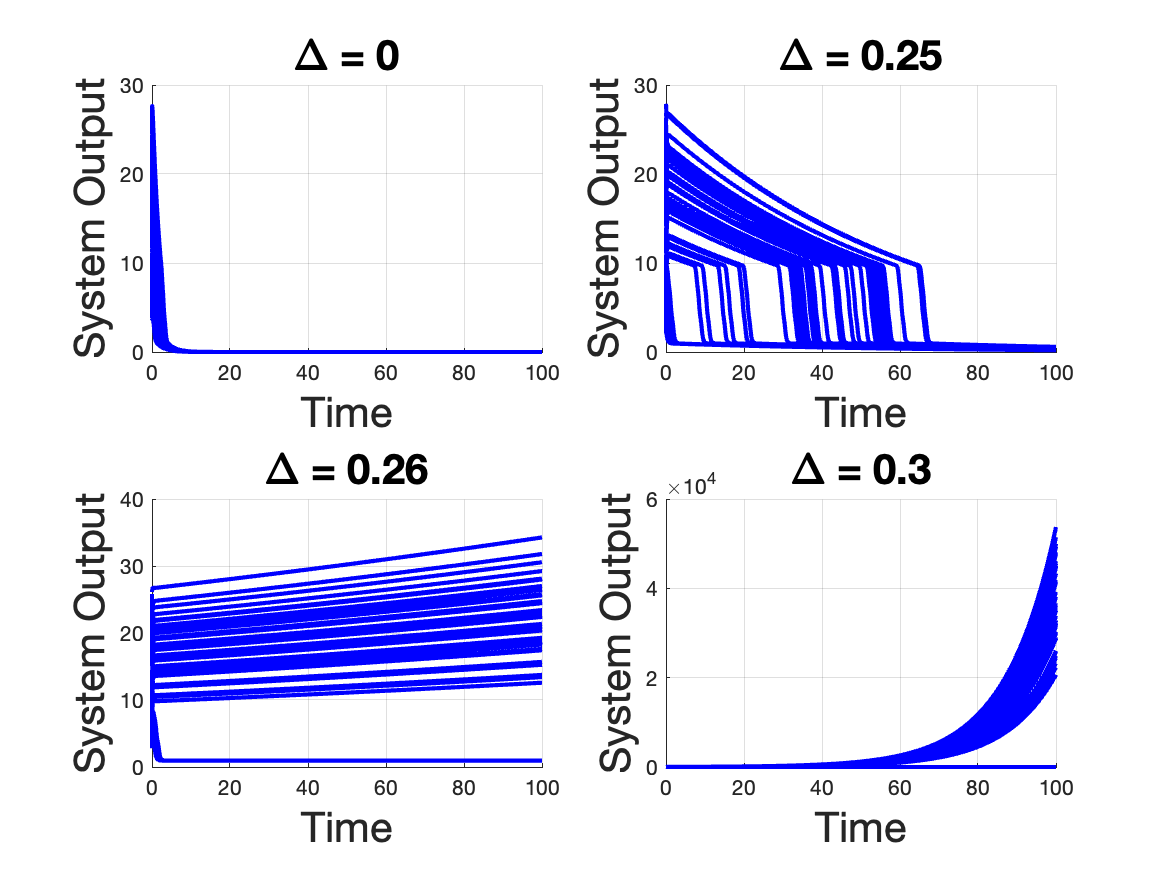}
    \caption{\small Random output trajectories of the system for various $\Delta$ values. The system preserves stability for $\Delta \leq 0.26$.}
    \label{fig:Luresys}
\end{figure}
\vspace{-.1cm}
\subsection{Neural Network-Controlled System}
Consider a linear system characterized by the matrices:
\begin{equation*}\small
A = \begin{bmatrix}
-5 & 3 & 1 \\
2 & -5  & 1 \\
3 & 1 & -4
\end{bmatrix}, \quad 
B = \begin{bmatrix}
0.5 \\
1 \\
0.4
\end{bmatrix}, \quad
C = \begin{bmatrix}
0.3 & 1 & 1
\end{bmatrix}.
\end{equation*}

The system is controlled by an FFNN trained on the data of a gain feedback controller. Using Theorem \ref{the:NNsect}, the NN sector bounds are calculated to be \([\Gamma_1,\Gamma_2]=[-0.91,0.91]\). Since $A+B\Gamma_1 C$ is Metzler and $A + B\Gamma_2 C$ is Hurwitz, the closed loop Lur'e system is positive and stable. Based on Theorem \ref{corollary:NN_stability_radius}, given \(D = \begin{bmatrix}
    1&0&0
\end{bmatrix}^T\), and \(E = \begin{bmatrix}
    1 & 0 & 0
\end{bmatrix}\), the stability radius of the system with the NN controller is
\(r_{\mathbb{f}} = 2.04\).
Fig. \ref{fig:NNinloop} shows random system trajectories as the perturbation increases and the system loses stability.
\begin{figure}
    \centering
    \includegraphics[width=.9\linewidth]{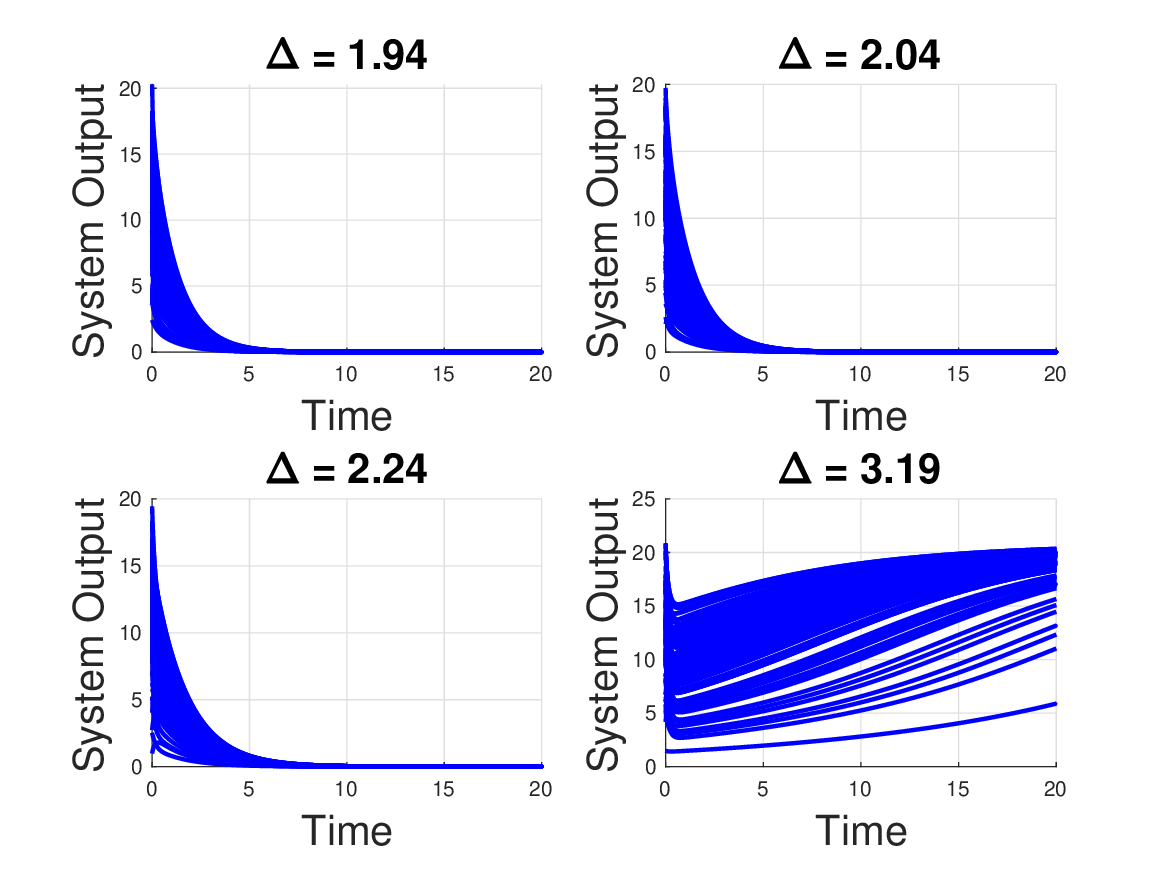}
    \caption{\small Random output trajectories of the system for various $\Delta$ values. The system is stable for $\Delta\leq2.04$ and transfers to instability with larger amounts of perturbations. This observation indicates that the NN upper sector bound $\Gamma_2$ might not be tight.}
    \label{fig:NNinloop}
\end{figure}

Note that the sector bounds calculated using Theorem \ref{the:NNsect} may be conservative in some cases, leading to a potentially conservative estimate for 
\(r_\mathbb{f} = 2.04\). 
This is particularly evident in the third diagram of Fig. \ref{fig:NNinloop}, where the system, even under a perturbation of \(\Delta = 2.24\) remains stable. This observation suggests that the upper limit of the NN sector bound, \(\Gamma_2\), may not be sufficiently tight.

To refine \(\Gamma_2\), we apply the procedure described in Corollary~\ref{cor:sectorbound}, incrementally increasing \(\Delta\) until the system becomes unstable. The corresponding critical value of \(\Delta\) is then substituted into equation~\eqref{eq:refined_gamma} to obtain a more accurate \(\Gamma_2\).

This process is illustrated in Figs. \ref{fig:unstabilizingD} and \ref{fig:uppergamma}. As shown in Fig. \ref{fig:unstabilizingD}, the system becomes unstable when the perturbation reaches \(\Delta = 3.15\). Fig. \ref{fig:uppergamma} shows the output of NN for random positive inputs and the recalculated upper bound \(\Gamma_{2,\text{refined}} z\). It is demonstrated that the NN output now falls outside the sector bound for the calculated $\Gamma_{2,\text{refined}}$ using the critical amount of \(\Delta = 3.15\) in equation \eqref{eq:refined_gamma}.

\begin{figure}
    \centering
    \includegraphics[width=.9\linewidth]{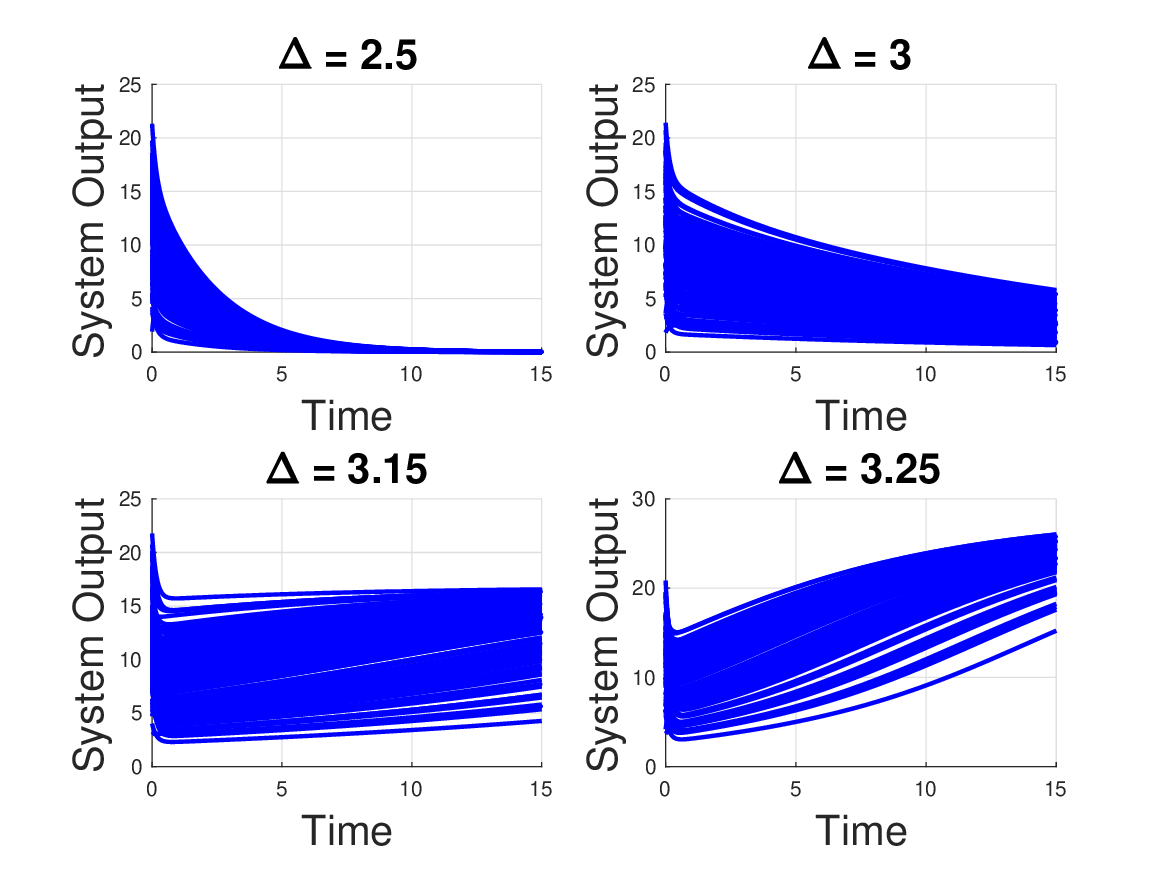}
    \caption{\small  Random output trajectories of the system for various $\Delta$ values. As illustrated, the system becomes unstable at \(\Delta = 3.15\).}
    \label{fig:unstabilizingD}
\end{figure}

\begin{figure}
    \centering
    \includegraphics[width=.9\linewidth]{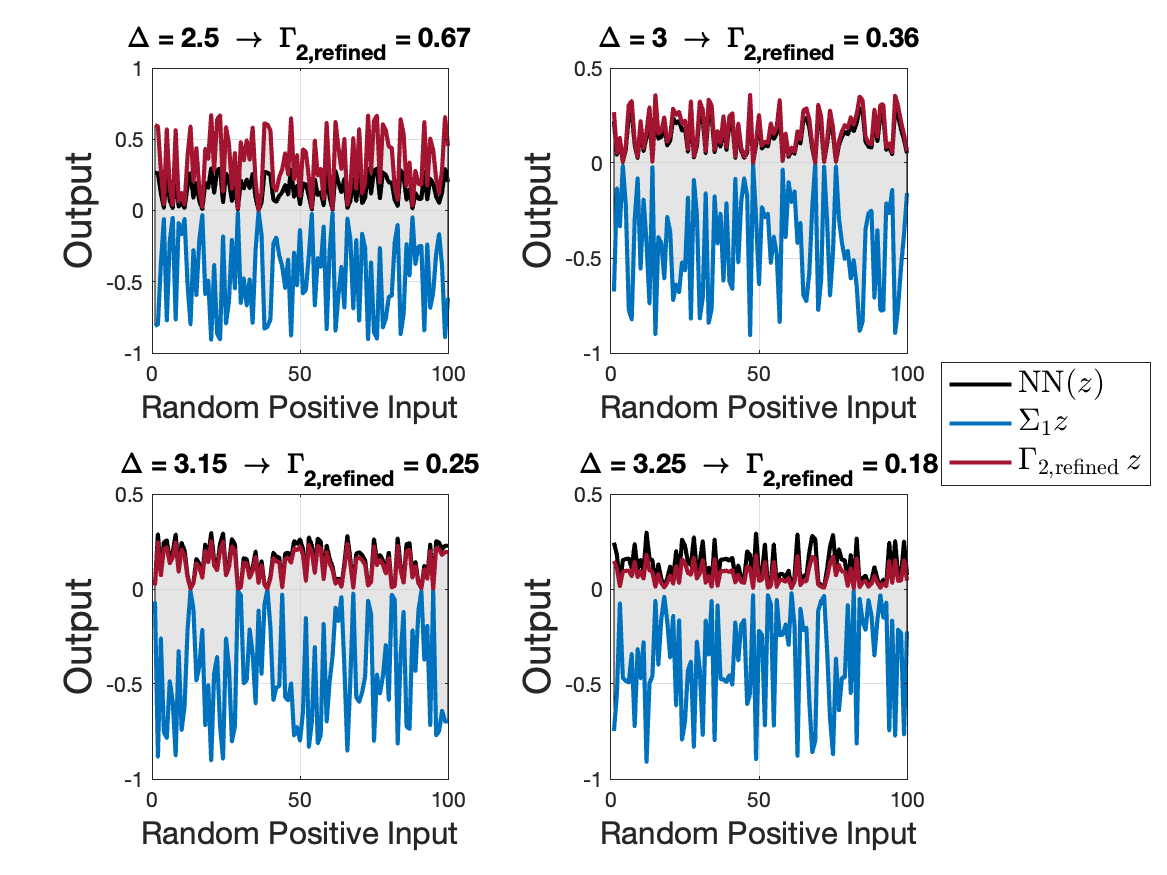}
    \caption{\small Plot of NN output vs. sector bounds for random positive inputs. Each figure shows the calculated value \(\Gamma_{2,\text{refined}}\) for the corresponding \(\Delta\) in Fig. \ref{fig:unstabilizingD}, using equation \eqref{eq:refined_gamma}. As illustrated here, the NN output falls out of sector bound for the amount of $\Gamma_{2,\text{refined}} = 0.25$ which corresponds to $\Delta = 3.15$.}
    \label{fig:uppergamma}
\end{figure}

\section{Discussion and Conclusion}
By leveraging positivity constraints, we utilize the Aizerman conjecture and the stability radius of Metzler matrices to evaluate the robustness of nonlinear Lur'e systems. The explicit expressions for the stability radius derived in Theorem \ref{theorem:stability_radius} and Theorem \ref{corollary:NN_stability_radius} define the robustness of positive Lur'e systems and NN-controlled systems against structured uncertainties. Corollary \ref{cor:sectorbound} also proposes a method for finding the tightest sector bounds for NNs. These results highlight several key insights:

\textbf{Dependence on the Upper Sector Bound:} The stability radius of nonlinear Lur'e systems is determined by the worst-case (upper) sector bound ($\Sigma_2$ or $\Gamma_2$). This underscores the importance of accurately characterizing the sector in which the nonlinearity or NN operates.

\textbf{Limitation:} Positivity is not imposed on the linear system; the matrix $A$ may be non-Metzler. Thus, the analysis covers a broad class of systems. It is only required that the closed-loop nonlinear system be positive through $A+B\Sigma_1C.$

\textbf{NN bound refinement:} Unlike other bounding methods such as CROWN and IBP, which are sound but incomplete, our sound and complete method yields an exact upper sector bound for the NN in the loop. It requires no knowledge of the NN parameters (weights, biases), treating the NN as a black box and shifting the analysis to the linear system—greatly reducing computational effort.

\textbf{Monotone Norms:}
Based on Lemma \ref{lem:StabRadiusMonotonicity}, perturbations are quantified using norms that are monotonic on the cone of nonnegative matrices. Notable examples of these norms include the 1-norm and the infinity norm. These norms are widely used to measure noise and uncertainties within the system, making them well-suited for our analysis.

\textbf{Role of System Matrices:} The matrices $A$, $B$, and $C$ directly influence the stability radius through the term $(A + B\Sigma_2 C)^{-1}$ or $(A + B\Gamma_2 C)^{-1}$. Well-designed system matrices can enhance robustness by reducing the norm $\left\| E (A + B\Sigma_2 C)^{-1} D \right\|_2$.

\textbf{Applicability to NNs:} By modeling the NN controller as a sector-bounded nonlinearity, we bridge the gap between traditional control theory and modern NN-based controllers. The derived stability radius provides a quantitative measure for the robustness of NN-controlled systems.

Future work may include the analysis of alternative perturbation structures, the integration of additional results from positive linear systems theory into the Lur'e framework, and incorporation of other NN architectures in the feedback loop.
\vspace{-.1cm}
\begin{spacing}{.91}
\bibliography{Reference1}
\end{spacing}
\end{document}